\newcommand*{\rom}[1]{\expandafter\@slowromancap\romannumeral #1@}
\definecolor{codegreen}{rgb}{0,0.6,0}
\definecolor{codegray}{rgb}{0.5,0.5,0.5}
\definecolor{codepurple}{rgb}{0.58,0,0.82}
\definecolor{backcolour}{rgb}{0.95,0.95,0.92}
\lstdefinestyle{mystyle}{  
	commentstyle=\color{codegreen},
	keywordstyle=\color{blue},
	numberstyle=\tiny\color{codegray},
	stringstyle=\color{codepurple},
	basicstyle=\ttfamily\footnotesize,
	breakatwhitespace=false,         
	breaklines=true,                 
	captionpos=b,                    
	keepspaces=true,                 
	numbers=left,                    
	numbersep=5pt,                  
	showspaces=false,                
	showstringspaces=false,
	showtabs=false,                  
	tabsize=2
}
\pgfplotsset{compat=1.10}
\numberwithin{equation}{section}
\newtheorem{thm}{Theorem}[section]
\newtheorem{cor}[thm]{Corollary}
\theoremstyle{definition}
\newtheorem{defn}[thm]{Definition}
\newtheorem{rmk}[thm]{Remark}
\newtheorem{ex}[thm]{Example}
\DeclareMathAlphabet\mathbfcal{OMS}{cmsy}{b}{n} 
\renewcommand{\BibLabel}{ \Hy@raisedlink{\hyper@anchorstart{cite.\CurrentBib}\hyper@anchorend}}
\begin{document}
	
\title{Manipulable outcomes within the class of scoring voting rules\thanks{The authors would like to thank Eric Kamwa for comments and helpful discussions. The first author gratefully acknowledges the financial supports from Universit\'e de Lyon (project INDEPTH Scientific Breakthrough Program of IDEX Lyon) within the program Investissement d'Avenir (ANR-16-IDEX-0005) and from Universit\'e de Franche-Comt\'e within the program Chrysalide-2020. The second author wishes to express his gratitude to F. Aleskerov for introducing him to the subject and constant guidance on the early stages of development of this project. He also  acknowledges the support of the International Laboratory of Decision Choice and Analysis (National Research University Higher School of Economics).}}
	
	\author{Mostapha Diss\thanks{CRESE EA3190, Univ. Bourgogne Franche-Comt\'e, F-25000 Besan\c con, France. Email: mostapha.diss@univ-fcomte.fr.} \and Boris Tsvelikhovskiy\thanks{Department of Mathematics, Northeastern University, 	Boston, MA, 02115, USA. Email:  tsvelikhovskiy.b@northeastern.edu.}}
	
	\maketitle

\begin{abstract}Coalitional manipulation in voting is considered to be any scenario in which  a group of voters decide to misrepresent their vote in order to secure an outcome they all prefer to the first outcome of the election when they vote honestly. The present paper is devoted to study coalitional  manipulability within the class of scoring voting rules.  For any such rule and any number of alternatives,  we introduce a new approach allowing  to characterize all the outcomes that can be manipulable by a coalition of voters. This gives us the possibility to find the probability of manipulable outcomes for some well-studied scoring voting rules in case of small number of alternatives  and  large electorates under a well-known assumption on individual preference profiles.\\\\
\noindent \textbf{Keywords:}  Voting; Scoring Rules; Coalition; Strategic Manipulation; Probability.\\
\noindent \textbf{JEL classification:}  D71, D72
\end{abstract}

\section{Introduction}\label{intro}

Since the seminal papers of  \cite{Gibbard}  and  \cite{Satterthwaite}  who proved that  every  non-dictatorial  social choice rule  can be manipulated in the presence of at least three alternatives, the problem of coalitional manipulation has received a lot of attention in recent decades in social choice theory. Broadly speaking,  a given social choice rule is called  \emph{coalitionally manipulable}  if there exists a given list of voting preferences and a coalition  of voters,\footnote{A coalition is defined as any non-empty subset of voters.}   such that the preferences of all the voters outside the coalition remain the same, while the preferences of voters within the coalition can be altered in such a way that the winner changes and each of the voters from the coalition is `happy about the change'. 

Scoring voting rules, also called \emph{positional voting rules}, have attracted a considerable amount of attention in the literature dealing with manipulation. This class of voting rules can be defined as follows: each voter's preference must be a vector that gives a number of points that the voter assigns to each alternative according to his or her position in the voter's preference. The points assigned  by  all voters are summed and the winning alternative has the highest number of points.  A number of studies has been conducted on the evaluation of the degree of manipulability of various social choice rules, i.e., the extent to which social choice rules are manipulable by a coalition of voters or by an individual voter. The reader may refer, for instance, to \cite{AlKurb,Chamberlin,D15,Ouafdi1,Ouafdi2,FavardinLepelley,Favardin,GehrleinMoyouwouLepelley,KaM,Kelly,KimRoush,Lepelley,Lepelley2,MoyouwouT2017,Nitzan,Peleg,Wilson1,Wilson2,Saari90}, and \cite{Schurmann}.  The methodology used in this literature consists first in characterizing the specific conditions that must be required for a given voting rule to be manipulable by a coalition of voters. The final step consists in the evaluation of the (theoretical) probability of this situation under various assumptions on voters' preferences. For more details on those probabilistic assumptions and their use in social choice theory, the reader can refer to  \cite{DM20,GehrleinLepelley2,GehrleinLepelley1}.

However, as pointed out in a recent paper  by  \cite{Ouafdi1}, with some notable exceptions the results appearing in the literature only deal with three-alternative elections, not because it is the most interesting case but due to the difficulties arising when considering more than three alternatives.  The main goal of this paper is to provide a significant improvement in this direction.  This is achieved via presenting systems of linear inequalities, which determine manipulable profiles, that is the lists defining the vote of all voters taking part in the decision process,  for the whole class of scoring voting rules independently on the number of alternatives.  More precisely, we focus on a detailed exposition of the new approach for obtaining the list of linear inequalities a profile satisfies if and only if it is manipulable in case of $m\geq 3$ alternatives for all scoring rules.  

The paper is organized as follows. Section \ref{prelimin} describes the basic framework.  The core of the paper is Section \ref{mainsection}, where the main results are presented. We start of with providing a simplified illustration of how our methodology works in case of three-alternative elections. Then, the precise list of inequalities for the whole class of scoring rules is presented.  Attention will be focused on the limiting case where the number of voters tends to infinity.  This will enable us to complete the existing literature dealing with the probability of manipulable outcomes by providing in  Section \ref{sectionproba}  the corresponding values for the three most studied scoring rules in this literature which are Plurality, Antiplurality and Borda rules in the presence of $m=4$ and $m=5$  alternatives under the well-known Impartial and Anonymous Culture assumption (defined later). The results are approximate since they are obtained by Monte-Carlo simulations, but with a high degree of precision, since we used profiles of cardinality $8\cdot 10^6$   for $m=4$ and $8\cdot 10^5$ for $m=5$ alternatives.  To the best of our knowledge none of those results  have appeared in the literature, with the exception of the Plurality rule  and  $m=4$ (see \cite{Ouafdi1}).   The new results clearly indicate that among the three positional rules under consideration Antiplurality has a much lesser degree of manipulability (see Tables \ref{3CandTable},  \ref{4CandTable}, and \ref{5CandTable}). The last section presents our conclusions.

\section{Preliminary definitions}\label{prelimin}

Let us assume that the number of voters/individuals is denoted by $n$ and the number of alternatives/candidates   by  $m$.  The voters' preferences are assumed to be linear orders which means that voters rank alternatives from
most preferred to least preferred and indifference is not allowed.  
An  \emph{anonymous  preference profile}, hereafter simply called a \emph{profile},  is an $m!$-tuple of non-negative integer numbers $(n_{1}, n_{2},\ldots, n_{m!})$  such that $n_{i}\geq 0$ for all $m! \geq i \geq 1$ and $\sum\limits_{i=1}^{m!} n_{i}=n$.\footnote{This is also known as a \emph{voting situation} in the literature.} Each $n_{i}$ is equal to the number of voters with preferences of type $i$.  
In the limit as $n\to \infty$,  we consider the normalized profile vectors of the form $p=(p_1,\hdots,p_{m!})$ with each $p_i=\frac{n_i}{n}\geq 0$ and $\sum\limits_{i=1}^{m!}p_i= 1$.  
Throughout the paper, the alternatives will be called $A_1$, $A_2$, \dots, $A_m$ and  the types of possible preference rankings of the alternatives will always be listed following an increasing order of $j$ in the notation $A_j$. In addition, the proportion of individuals having each possible preference ranking will be denoted accordingly. For instance, in case of $m=3$ alternatives, the  possible preference rankings are denoted by  $(A_1,A_2,A_3)$ $p_{1}$, $(A_1,A_3,A_2)$  $p_{2}$, $(A_2,A_1,A_3)$  $p_{3}$, $(A_2,A_3,A_1)$  $p_{4}$, $(A_3,A_1,A_2)$  $p_{5}$ and $(A_3,A_2,A_1)$  $p_{6}$. The notation $(A_1,A_2,A_3)$ $p_{1}$, for instance, means that a fraction $p_1$ of individuals  have preferences  with $A_1$  being most preferred, $A_3$ being least preferred and with $A_2$ being ranked between them.

In the case of  $m$ alternatives and individual preferences expressed   as  linear  orders,  we  can define the class of scoring voting rules as follows:

\begin{defn}
In the case of linear orders on $m$ alternatives, every scoring voting rule can be defined by the  \textit{weight vector}   $w=(w_1,w_2,\hdots,w_m)$  such  that  $w_1\geq w_2\geq\hdots\geq w_m$  and  $w_1>w_m$.   It means that  each time an alternative is ranked $r$-th by one voter it obtains $w_r$ points;  afterwards we select the alternative(s)  obtaining  the greatest aggregated score.
\end{defn}

Given this definition, we are now ready to describe three well-known scoring voting rules. Under Plurality rule each voter has one vote which he/she can cast for any one of the $m$ alternatives, i.e.,  $w=(1,0,\dots, 0)$.  Under Borda rule, we select the  alternative with the highest Borda score such that  each first-place vote is worth $m-1$ points, each second-place vote is worth $m-2$ points, and so on until $0$ point to each last-place vote, i.e.,   $w=(m-1,m-2,\dots, 1, 0)$. Under Antiplurality rule, we return the  alternative with the  highest aggregated score such that each voter assigns one point for any one of the $m-1$  best ranked alternatives, i.e.,  $w=(1,1,\dots,1, 0)$.  

Note that in the presence of three alternatives  $A_1$, $A_2$, and $A_3$, we see that any  general scoring rule $w=(w_1,w_2,w_3)$  is equivalent to the normalized scoring vector $(1,\lambda, 0)$  with  $0\leq \lambda\leq 1$. This is obtained by subtracting $w_3$ and dividing by $w_1-w_3$ any component of $w$. The well-known scoring rules become the Plurality rule with $\lambda=0$, the Borda rule with $\lambda=0.5$, and the Antiplurality rule with $\lambda=1$.

As already mentioned we focus in this paper on the coalitional manipulation which can be defined as follows:

\begin{defn}
The outcome of an election is said to be  \emph{coalitionally manipulable} if there exists an alternative  such that all members of the electorate for whom she is preferable over the winning alternative can change their preferences in such a way that this alternative wins the election. The preferences of the rest of the electorate remain the same. 
\end{defn}

We mainly focus in this paper on the manipulation by coalitions of maximal sizes but we will also show that  the approach that we consider  can be adapted when  manipulation by coalitions of smaller sizes of the electorate is
considered. 

We say that the initial arrangement  is  $(A_1,A_2,\hdots,A_m)$ if the collective ranking of the alternatives before manipulation is: $A_1$ first, $A_2$ second and so on until alternative  $A_m$ which is ranked last.  Let us consider an example in order to highlight the notion of manipulation.

\begin{ex} \label{BordaManipEx} Let $m=3$ and the scoring rule be Borda rule, i.e.,  $w=(2,1,0)$. To show a manipulable outcome, we set  $p_{1}=5/9$,  $p_{3}=4/9$, and  $p_{j}=0$  for $j\neq 1,3$. By the definition of Borda rule, the initial arrangement is  $(A_1,A_2,A_3)$ since alternative $A_1$ wins and gets $14/9$ points, $A_2$ gets $13/9$ points,  and finally $A_3$ gets $0$ points.  But, if  voters having the preference $(A_2,A_3,A_1)$ chose to vote  $(A_2,A_1,A_3)$ instead, then the final result would be in favor of alternative $A_2$ since $A_1$ gets $10/9$ points, $A_2$ gets $13/9$ points, and $A_3$ gets $4/9$ points.      \end{ex}

With a given initial arrangement,   $m-1$  different coalitions can be formed. This is due to the fact that for each alternative, who has not won the election, there may be a group of voters who rank it higher than the winner. As any coalition is determined by the alternative it would prefer to have as the winner of the election, the alternatives' names will be used for the notations of the coalitions hereafter. Alternative ${A_i}$ will be called \emph{unifying} for the coalition denoted by $Coal_{A_i}$ that would prefer to have ${A_i}$ as the winner and let us denote by $\mathfrak{C}_{A_i}$  the proportion of individuals in this coalition. In the case of $m=3$ alternatives, for instance, two coalitions can be formed  since the unifying alternative can be any alternative except the winner.  In this case, when the winner is alternative $A_1$, we have that alternative $A_2$ is unifying for the coalition $Coal_{A_2}$ formed by voters with preferences $(A_2,A_1,A_3),(A_2,A_3,A_1)$ and $(A_3,A_2,A_1)$  and alternative $A_3$ is unifying for the coalition $Coal_{A_3}$ formed by voters with  preferences  $(A_3,A_1,A_2),(A_3,A_2,A_1)$ and $(A_2,A_3,A_1)$. It is clear that the preferences  $(A_2,A_3,A_1)$ and $(A_3,A_2,A_1)$ participate in both coalitions $Coal_{A_2}$ and $Coal_{A_3}$.

We introduce now the concept of  \textit{intermediate preferences}.
 
 \begin{defn} \textit{Intermediate preferences} are  preferences of the form $(*,*,\hdots,*)$ with each "$*$" symbol being either an alternative's name or the "$?$" mark, the latter representing the fact that its corresponding rank has yet to be assigned by the voter.  \end{defn}
 
Typical examples of  intermediate preferences  in three-alternative elections are $(?,?,?)$  where the concerned  voter is undecided on all of the three positions and $(A_1,?,?)$ where the concerned voter ranks alternative $A_1$ at the first position and he/she is undecided on the second and third positions in his/her preference.

Let us now introduce the number $d(A_k,A_i)$, with $i \neq k$,  which denotes the difference in points between the unifying alternative $A_k$  and alternative  $A_{i}$ prior to the coalition participants' arrangement of all the places except the first.  In the case of three alternatives, when the winner is alternative $A_1$ for instance, individuals of $Coal_{A_2}$ (of proportion $\mathfrak{C}_{A_2}=p_{3}+p_{4}+p_{6}$)  vote for alternative $A_2$, i.e., the unifying alternative of that coalition. In addition, those individuals only assign  weight  $1$ to their first preferred alternative and zero weight to the other alternatives prior to the coalition participants' arrangement of the second and third places. In other words, those individuals have an intermediate preference of type $(A_2,?,?)$. Note that the other individuals (of proportion $p_{1}+p_{2}+p_{5}$)  assign all their points and vote sincerely when we focus on the manipulation in favor of $A_2$.  As a consequence, the score of alternative $A_2$ is $\lambda p_1+ p_3+p_4+p_6$ and the score of alternative $A_1$ is $p_1+p_2+\lambda p_5$.  This leads to $d(A_2,A_1)=p_{3}+p_{4}+p_{6}-p_2+(\lambda -1)p_1-p_{5}$.

\section{Conditions describing manipulability}\label{mainsection}

In this section we show that for every coalition there exists a system of inequalities which characterize the set of all the profiles that are manipulable by it. In order to make a  simplified  illustration  of the general case that we will provide in Theorem \ref{mainThm},  let us start first by the case of  three-alternative elections. Let again the alternatives be referred to as $A_1$, $A_2$, and $A_3$ and consider the representation $w=(1,\lambda, 0)$, with $0\leq \lambda\leq 1$, that defines all possible scoring rules with three alternatives. Using all the above definitions, we are ready to provide the list of inequalities which characterizes the set of all the profiles $p=(p_1,\hdots,p_6)$  that are manipulable by a coalition of voters for  three alternatives.  The first system indicates conditions where manipulation is in favour of  $A_2$ whereas the second system is given for a manipulation in favour of  $A_3$.
\begin{equation}
\begin{split}\label{ManipEq}
\begin{cases}
 p_i\geq 0, ~i\in\{1,2,\hdots,6\}\\
  p_{1}+p_{2}+p_{3}+p_{4}+p_{5}+p_{6}=1 \\
 (1-\lambda)p_{1}+p_{2}+(\lambda-1)p_{3}- p_{4}+\lambda p_{5}-\lambda p_{6}>0, & A_1~beats~A_2  \\ \lambda p_{1}-\lambda p_{2}+p_{3}+(1-\lambda)p_{4}- p_{5}+(\lambda-1)p_{6}>0, & A_2~beats~A_3  \\  p_{3}+p_{4}+p_{6}-p_2+(\lambda-1)p_1-\lambda p_{5}>0, & d(A_2,A_1)>0 \\ (2-\lambda)(p_{3}+p_{4}+p_{6})+(2\lambda-1)p_1-(1+\lambda)p_2-(1+\lambda)p_5>0, & d(A_2,A_1)+d(A_2,A_3)>\lambda\mathfrak{C}_{A_2}\\  \end{cases}\\
 \begin{cases} 
 p_i\geq 0, ~i\in\{1,2,\hdots,6\}\\
 p_{1}+p_{2}+p_{3}+p_{4}+p_{5}+p_{6}= 1 \\
  (1-\lambda)p_{1}+p_{2}+(\lambda-1)p_{3}- p_{4}+\lambda p_{5}-\lambda p_{6}>0, & A_1~beats~A_2  \\ \lambda p_{1}-\lambda p_{2}+p_{3}+(1-\lambda)p_{4}- p_{5}+(\lambda-1)p_{6}>0, & A_2~beats~A_3  \\  p_{4}+p_{5}+p_{6}-p_1+(\lambda-1)p_2-\lambda p_{3}>0, & d(A_3,A_1)>0 \\
 p_{4}+p_{5}+p_{6}-p_3+\lambda p_2-\lambda p_{1}>0, & d(A_3,A_2)>0\\ (2-\lambda)(p_{4}+p_{5}+p_{6})-(1+\lambda)p_1+(2\lambda-1)p_2-(1+\lambda)p_3>0, & d(A_3,A_1)+d(A_3,A_2)>\lambda\mathfrak{C}_{A_3}.    \end{cases}
 \end{split}
 \end{equation}

Without loss of generality we assume that the initial arrangement is  ($A_1,A_2,A_3$). This means that  $A_1$ has more total points than $A_2$ who in turn has more total points than $A_3$. This explains the two first inequalities of every system. Recall that  when the winner is alternative $A_1$ we have that alternative $A_2$ is unifying for the coalition $Coal_{A_2}$ formed by voters with preferences $(A_2,A_1,A_3)$,$(A_2,A_3,A_1)$, and $(A_3,A_2,A_1)$. If we count the number of points for each alternative prior to the coalition participants' arrangement of all the places except the first which is given to the unifying alternative $A_k$ (i.e., having intermediate preferences of the form $(A_k,?,?)$), the unifying alternative $A_k$ must have the largest number of points. Thus, the two  inequalities $d(A_k, A_i)>0,~i\neq k$ are necessary with $k=2$ and $k=3$ for the first and the second systems, respectively. Note that the inequality called `$A_2$ beats $A_3$' in the first system leads to $d(A_2,A_3)>0$. The next step is to understand when individuals of the coalition unified by an alternative can successfully manipulate the election.   Actually, all the freedom they possess at this point is to give $\lambda$ points from each participant of the coalition (by assigning the second place) to one of the remaining two alternatives without violating the condition that their alternative has more points.  This can be done if and only if the sum of the differences in points between the unifying alternative and each of the remaining two, prior to the coalition members' making a choice of their second most preferable alternative, is greater than  $\lambda$  times the number of people in the coalition. This is equivalent to $d(A_2,A_1)+d(A_2,A_3)>\lambda\mathfrak{C}_{A_2} $ and $d(A_3,A_1)+d(A_3,A_2)>\lambda\mathfrak{C}_{A_3}$ in the first and the second systems, respectively. Recalling that we consider a manipulation by  coalitions of maximal sizes (i.e., $\mathfrak{C}_{A_2}=p_{3}+p_{4}+p_{6}$ and $\mathfrak{C}_{A_3}=p_{4}+p_{5}+p_{6}$), this leads to the last inequality of every system. That analysis gives first insights on the validity of our approach since the systems described above are exactly the same as the ones already obtained by  \cite{MoyouwouT2017}  who  determined necessary and sufficient conditions for a given profile to be coalitionally manipulable under a general scoring rule with three alternatives. 

We consider now the general case of $m$ alternatives. Recall that the unifying alternative can be any alternative except the winner so that $m-1$ coalitions can be formed.  Here again we consider the coalitions of maximal sizes, i.e., each coalition consists of  all voters with the unifying alternative ranked higher than the winner in their preferences. The set of such preferences has cardinality $\frac{m!}{2}$.

\begin{thm}\label{mainThm}
Let $m\geq 3$ and suppose that the initial arrangement is  $(A_1,A_2,\hdots,A_m)$ under the positional voting rule having weight vector $w=(w_1,w_2,\hdots,w_{m})$.   A profile $p=(p_1,\hdots,p_{m!})$ is manipulable by the coalition with unifying alternative $A_k$ if and only if $p$ satisfies the following system of inequalities: 
\begin{equation}
\label{ManipGeneralM}
\begin{cases}
\mbox{Preliminary Inequalities (PI)}
\\
~~~~~~~~~~~~\begin{cases}

 \sum\limits_{i=1}^{m!}p_i=1 \\
p_i\geq 0\\
\mbox{The initial arrangement is } (A_1,A_2,\hdots,A_m) 
\end{cases}\\
\\
\mbox{Strategic Inequalities (SI)}
\\
~~~~~~~~~~~~\begin{cases}
 
\sum\limits_{i\neq k}d(A_k,A_i)>(w_2+\hdots+w_{m-1}+w_{m})\mathfrak{C}_{A_k}\\ \sum\limits_{i\neq k}d(A_k,A_i)-M_1>(w_3+\hdots+w_{m-1}+w_{m})\mathfrak{C}_{A_k}\\\hdots \\\sum\limits_{i\neq k}d(A_k,A_i)-\sum\limits_{j=1}^{m-4}M_j>(w_{m-2}+w_{m-1}+w_m)\mathfrak{C}_{A_k}\\  
\sum\limits_{i\neq k}d(A_k,A_i)-\sum\limits_{j=1}^{m-3}M_j>(w_{m-1}+w_{m})\mathfrak{C}_{A_k}\\ 
M_{m-1}>w_{m}\mathfrak{C}_{A_k}, \\
\end{cases}
\end{cases}
\end{equation}
where $M_1=\mbox{max}\{d(A_k,A_i)~|~i\neq k\}, M_2=\mbox{max}_2\{d(A_k,A_i)~|~i\neq k\},\hdots, M_{m-1}=\mbox{max}_{m-1}\{d(A_k,A_i)~|~i\neq k\}$. The notation  $\mbox{max}_s\{d(A_k,A_i)~|~i\neq k\}$ stands for the $s^{th}$ largest value in $\{d(A_k,A_i)~|~i\neq k\}$.
\end{thm}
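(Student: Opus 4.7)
The approach is to recognize the strategic inequalities (SI) as the Hall/Edmonds feasibility conditions for a point of a permutahedron lying strictly below a box. Every member of $Coal_{A_k}$ puts $A_k$ first (which is already accounted for in $d(A_k,A_i)$) and then freely chooses a bijection between positions $2,\dots,m$ and the $m-1$ non-$A_k$ alternatives. Writing $t_i$ for the total weight that the coalition adds to $A_i$ through positions $2,\dots,m$, the manipulation succeeds exactly when $A_k$ strictly beats every $A_i$ after relabelling, i.e., when $t_i<d(A_k,A_i)$ for all $i\neq k$. The PI part of the system just encodes validity of the profile and the hypothesis on the initial arrangement, so the whole content of the theorem is the equivalence between the SI part and the existence of such a vector $(t_i)$.

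First I would characterize the achievable $(t_i)_{i\neq k}$. If $c_{i,r}$ denotes the proportion of the coalition ranking $A_i$ at position $r\in\{2,\dots,m\}$, then $c_{i,r}/\mathfrak{C}_{A_k}$ is a doubly stochastic $(m-1)\times(m-1)$ matrix, and by the Birkhoff-von Neumann theorem any such matrix is a convex combination of permutation matrices. Under the linear map $t_i=\sum_r w_r c_{i,r}$, the permutation matrix of $\sigma$ sends $t$ to $\mathfrak{C}_{A_k}(w_{\sigma(2)},\dots,w_{\sigma(m)})$; hence the set of realizable $(t_i/\mathfrak{C}_{A_k})_{i\neq k}$ is exactly the permutahedron $P(w_2,\dots,w_m)$, the convex hull of all permutations of $(w_2,\dots,w_m)$. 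Setting $b_i:=d(A_k,A_i)/\mathfrak{C}_{A_k}$, manipulation is feasible iff some $y\in P(w_2,\dots,w_m)$ satisfies $y_i<b_i$ for every $i\neq k$.

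The core step is the following Hall-type characterization: such a $y$ exists iff for every $s\in\{1,\dots,m-1\}$, the sum of the $s$ smallest $b_i$ strictly exceeds the sum of the $s$ smallest among $w_2,\dots,w_m$. Necessity is straightforward: for any $y\in P(w_2,\dots,w_m)$ and any $S$ of size $s$, the rearrangement inequality gives $\sum_{i\in S}y_i\geq w_{m-s+1}+\cdots+w_m$ (this minimum is attained at a vertex); choosing $S$ to be the $s$ indices with the smallest $b_i$ and summing $y_i<b_i$ then yields the condition. Sufficiency is the nontrivial direction; I would prove it by first passing from strict to weak inequalities via a perturbation $b\mapsto b-\varepsilon$, and then invoking the standard base-polytope feasibility result (equivalently, Hall's marriage theorem applied to a bipartite transportation network between ``position slots'' and ``alternative buckets'' with capacities $w_r$ and $b_i$) to produce some $y\in P$ with $y_i\leq b_i-\varepsilon$.

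It remains to translate the $s$-th inequality into the form appearing in (\ref{ManipGeneralM}). By the definition of the $M_j$, the sum of the $s$ smallest $d(A_k,A_i)$ equals $\sum_{i\neq k}d(A_k,A_i)-\sum_{j=1}^{m-1-s}M_j$, while the sum of the $s$ smallest of $w_2,\dots,w_m$ is $w_{m-s+1}+\cdots+w_m$; multiplying by $\mathfrak{C}_{A_k}$ and letting $s$ decrease from $m-1$ down to $1$ produces line by line the displayed strategic inequalities (with $s=m-1$ giving the first line and $s=1$ giving the last line $M_{m-1}>w_m\mathfrak{C}_{A_k}$). The main obstacle is the Hall/Edmonds step in paragraph three; everything else is either a direct translation of definitions or a routine application of Birkhoff-von Neumann.
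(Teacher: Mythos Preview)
Your proposal is correct and takes a genuinely different route from the paper. The paper proceeds by induction on $m$: for sufficiency it gives an explicit step-by-step construction in which the coalition first places $A_k$ on top, then handles the alternative $A_s$ realising $M_1$ either by putting $A_s$ uniformly in position $2$ (when $M_1>w_2\mathfrak{C}_{A_k}$) or by splitting the coalition between two adjacent positions $\sigma-1,\sigma$ so that $A_s$ receives just under $d(A_k,A_s)$ points, thereby reducing to a scoring rule on $m-1$ alternatives with a modified weight $w_{\sigma-1,\sigma}$; necessity is likewise inductive, observing that the SI block with its first line deleted is the SI block for the reduced rule. You instead identify the feasible coalition allocations $(t_i)$ with $\mathfrak{C}_{A_k}$ times the permutahedron $P(w_2,\dots,w_m)$ (via Birkhoff--von~Neumann) and read the SI system as the Hall/majorization condition for this permutahedron to meet the open box $\{y_i<b_i\}$. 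The paper's argument is self-contained and constructive (it literally tells the coalition how to vote), at the price of somewhat delicate bookkeeping with the hybrid weight and a rather terse necessity direction; your argument is shorter, treats both directions symmetrically, and explains structurally why the inequalities are indexed by subset sizes, but it imports outside machinery, and the sufficiency step you flag as ``the main obstacle'' would need either a precise reference or a short direct proof (LP duality or a greedy water-filling) to be complete. Two small points worth spelling out in a full write-up: that placing $A_k$ first is without loss of generality for every coalition member (a one-line swap), and that every point of the permutahedron is actually \emph{achievable} by splitting the coalition according to a Birkhoff decomposition---this is exactly where the large-$n$ limit enters, consistent with the paper's later remark that the theorem fails for finite electorates.
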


\begin{proof}

First let us show that the SI inequalities in \eqref{ManipGeneralM} are necessary. This will be checked by induction on $m$, the base being $m=3$. Notice that it is clearly reasonable for all the members of the coalition to start filling their profiles by updating their intermediate preferences  to $(A_k,?,\hdots,?)$.  The first inequality in the SI system simply guarantees that after that update the  coalition participants can distribute the remaining votes they have to give without violating the fact that $A_k$ has more points than any other alternatives with no restrictions on the distribution of the votes. The second to last inequalities of the SI system can be seen to be those for the positional rule with $w'=(w_1,w_3,\hdots,w_{m-1},w_{m})$ and $m-1$ alternatives (alternative $A_{s}$ with $M_1=d(A_k,A_{s})$ is eliminated). The statement follows.\\ \\
Now we establish the sufficiency of inequalities \eqref{ManipGeneralM}.  We argue by induction on the number of alternatives $m$, the base being $m=3$. The sufficiency of inequalities \eqref{ManipGeneralM} is established via the step by step procedure of updating the intermediate preferences of coalition participants presented below.
\begin{enumerate}
\item[\textit{Step 1}.] All the members of the coalition start filling their preferences by putting $A_k$ as the most preferrable alternative. At this point the preferences of all participants of $Coal_{A_k}$ are identical and equal to $(A_k,?,\hdots,?)$, where the question marks stand for the places yet to be assigned.
\item[\textit{Step 2}.] Let $A_{s}$ be the alternative with $d(A_k,A_{s})=M_1$. We will distinguish between two cases:
\vspace{0.5cm}
\begin{enumerate}
\item [\textit{case (a)}] $M_1> w_2\mathfrak{C}_{A_k}$.  Then all participants of $Coal_{A_k}$ update their intermediate preferences to $(A_k,A_{s},?,\hdots,?)$ and the list of inequalities in the SI system of \eqref{ManipGeneralM} without the second one from the top is easily seen to be the one for the positional rule with $m-1$ alternatives and weight vector $(w_1,w_3,\hdots,w_{m})$;
\vspace{0.5cm}

\item [\textit{case (b)}] 
$w_{\sigma}\mathfrak{C}_{A_k}< M_1<w_{\sigma-1}\mathfrak{C}_{A_k}$ for some $2<\sigma\leq m$. We would like to point out that the case $M_1<w_{m}\mathfrak{C}_{A_k}$ can never occur as it would imply 	$\sum\limits_{i\neq k}d(A_k,A_i)<(m-1)w_{m}\mathfrak{C}_{A_k}<(w_2+\hdots+w_{m-1}+w_{m})\mathfrak{C}_{A_k}$ and violate the first (SI) inequality.

\vspace{0.3cm}
  Let $\varepsilon>0$ be an arbitrary small positive constant, satisfying
\vspace{0.3cm}
\begin{enumerate}
\item[($\star$)] $\varepsilon$ does not exceed the smallest difference between the left-hand side and right-hand side  of the SI inequalities of \eqref{ManipGeneralM}.
\end{enumerate}
\vspace{0.3cm}

It will serve as the remaining difference in points between $A_k$ and $A_s$ after the update of preferences described below.  Set $t:=\frac{M_1-w_{\sigma}\mathfrak{C}_{A_k}}{w_{\sigma-1}-w_{\sigma}}-\varepsilon$ and let $t$ `participants' update their  intermediate preferences to $(A_k,?,\hdots,?,\underset{\sigma-1}{A_s},?,\hdots,?)$ and the remaining  $\mathfrak{C}_{A_k}-t$ to $(A_k,?,\hdots,?,\underset{\sigma}{A_s},?,\hdots,?)$. Introduce the weight 
\begin{equation}
w_{\sigma-1,\sigma}:=\frac{(\mathfrak{C}_{A_k}-t)w_{\sigma-1}+tw_{\sigma}}{\mathfrak{C}_{A_k}}
\end{equation}
and set 
\begin{equation}
w'=(w_1,\hdots,w_{\sigma-2},w_{\sigma-1,\sigma},w_{\sigma+1},\hdots,w_{m}). 
\end{equation}
After updates of coalition participants' intermediate preferences and elimination of alternative $A_s$, the system of SI inequalities in \eqref{ManipGeneralM} becomes the one for the positional rule with $m-1$ alternatives and weight vector $w'$:
\begin{equation}
\begin{cases}
\sum\limits_{i\neq k,s}d(A_k,A_i)>(w_2+\hdots+w_{\sigma-1,\sigma}+\hdots+w_m)\mathfrak{C}_{A_k}\\ 
\sum\limits_{i\neq k,s}d(A_k,A_i)-M_2>(w_3+\hdots+w_{\sigma-1,\sigma}+\hdots+w_m)\mathfrak{C}_{A_k}\\
\hdots \\
\sum\limits_{i\neq k,s}d(A_k,A_i)-\sum\limits_{j=2}^{\sigma-2}M_j>(w_{\sigma-1,\sigma}+\hdots+w_{m})\mathfrak{C}_{A_k}\\
\sum\limits_{i\neq k,s}d(A_k,A_i)-\sum\limits_{j=2}^{\sigma-1}M_j>(w_{\sigma+1}+\hdots+w_m)\mathfrak{C}_{A_k}\\
\hdots \\
\sum\limits_{i\neq k,s}d(A_k,A_i)-\sum\limits_{j=2}^{m-4}M_j>(w_{m-2}+w_{m-1}+w_m)\mathfrak{C}_{A_k}\\  
\sum\limits_{i\neq k,s}d(A_k,A_i)-\sum\limits_{j=2}^{m-3}M_j>(w_{m-1}+w_m)\mathfrak{C}_{A_k}\\
M_{m-1}>w_m\mathfrak{C}_{A_k}, 
\end{cases}
\end{equation}

which, using that $M_1=d(A_k,A_s)=(w_{\sigma-1}+w_{\sigma}-w_{\sigma-1,\sigma})\mathfrak{C}_{A_k}$, can be rewritten as

\begin{equation}\label{intermStep}
\begin{cases}
\sum\limits_{i\neq k}d(A_k,A_i)>(w_2+\hdots+w_{\sigma-1}+w_{\sigma}+\hdots+w_m)\mathfrak{C}_{A_k}+\varepsilon \\ 
\sum\limits_{i\neq k}d(A_k,A_i)-M_2>(w_3+\hdots+w_{\sigma-1}+w_{\sigma}+\hdots+w_m)\mathfrak{C}_{A_k}+\varepsilon \\
\hdots \\
\sum\limits_{i\neq k}d(A_k,A_i)-\sum\limits_{j=2}^{\sigma-2}M_j>(w_{\sigma-1}+w_{\sigma}+\hdots+w_m)\mathfrak{C}_{A_k}+\varepsilon \\
\sum\limits_{i\neq k}d(A_k,A_i)-\sum\limits_{j=2}^{\sigma-1}M_j>(w_{\sigma+1}+\hdots+w_m)\mathfrak{C}_{A_k} \\
\hdots \\
\sum\limits_{i\neq k}d(A_k,A_i)-\sum\limits_{j=1}^{m-4}M_j>(w_{m-2}+w_{m-1}+w_m)\mathfrak{C}_{A_k} \\  
\sum\limits_{i\neq k}d(A_k,A_i)-\sum\limits_{j=1}^{m-3}M_j>(w_{m-1}+w_m)\mathfrak{C}_{A_k} \\
M_{m-1}>w_m\mathfrak{C}_{A_k}
\end{cases}
\end{equation}

Recalling that $M_1$ is greater than any other $M_i$ by definition and using the $(\star)$ assumption on $\varepsilon$, we see that each of the inequalities in \eqref{intermStep} follows from the corresponding SI  inequality in \eqref{ManipGeneralM}.

\end{enumerate}
\end{enumerate}
\end{proof}

Let us now illustrate the results of this theorem using an example.

\begin{ex}
Consider the Borda rule with weight vector $w=(3,2,1,0)$ for  $m=4$ alternatives $A_1$, $A_2$, $A_3$, and $A_4$ and the voting profile given in Table \ref{ExampleTable} below. In this example, the initial arrangement  is  $(A_1,A_2,A_3,A_4)$ as it is shown in the last row.  We would like to determine if the chosen profile is vulnerable to manipulability by the largest possible coalition with unifying alternative $A_2$, i.e., $Coal_{A_2}$ of cardinality $\mathfrak{C}_{A_2}=p_7+p_8+p_{15}=5/9$. 
\begin{table}[H]
\caption{Voting profile and corresponding point distribution}\label{ExampleTable}\begin{center}
\vspace{-0.3cm}\begin{tabular}{ |c|c|c|c|c|c|} 
\hline
Preference & Fraction & $A_1$ &  $A_2$ &  $A_3$ & $A_4$ \\ 
\hline
$(A_2,A_1,A_3,A_4)$ & $p_7=2/9$ & $4/9$ &$6/9$ & $2/9$ & $0$\\ 
\hline
$(A_2,A_1,A_4,A_3)$ & $p_8=2/9$ & $4/9$ &$6/9$ & $0$ &$2/9$  \\ 
\hline
$(A_3,A_1,A_4,A_2)$ & $p_{14}=1/9$ & $2/9$ &$0$ & $3/9$ &$1/9$ \\ 
\hline
$(A_3,A_2,A_1,A_4)$ & $p_{15}=1/9$ & $1/9$ &$2/9$ & $3/9$ &$0$ \\ 
\hline
$(A_3,A_4,A_1,A_2)$ & $p_{17}=1/9$ & $1/9$ &$0$ & $3/9$ &$2/9$ \\ 
\hline
$(A_4,A_1,A_3,A_2)$ & $p_{20}=2/9$ & $4/9$ &$0$ & $2/9$ & $6/9$\\ 
\hline
Other & $p_i=0$ & $0$ &$0$ & $0$ &$0$ \\ 
\hline
$\sum$ & $1$ & $16/9$ &$14/9$ & $13/9$ &$11/9$ \\ 
\hline
\end{tabular}
\end{center}
\end{table}

In order to answer this question we will adhere to the step by step procedure given in the proof of Theorem \ref{mainThm} above. The intermediate preferences of the participants in the coalition after the first update are presented in Table  \ref{ExampleInterm1Table}. 

\begin{table}[H]
\caption{Step $1$ intermediate preferences and corresponding point distribution}
\label{ExampleInterm1Table}
\begin{center}
\vspace{-0.3cm}\begin{tabular}{ |c|c|c|c|c|c|} 
\hline
Preference & Fraction & $A_1$ &  $A_2$ &  $A_3$ & $A_4$ \\ 
\hline
$(A_2,?,?,?)$ & $\mathfrak{C}_{A_2}=5/9$ & $0$ &$15/9$ & $0$ & $0$\\  
\hline
$(A_3,A_1,A_4,A_2)$ & $p_{14}=1/9$ & $2/9$ &$0$ & $3/9$ &$1/9$ \\ 
\hline
$(A_3,A_4,A_1,A_2)$ & $p_{17}=1/9$ & $1/9$ &$0$ & $3/9$ &$2/9$ \\ 
\hline
$(A_4,A_1,A_3,A_2)$ & $p_{20}=2/9$ & $4/9$ &$0$ & $2/9$ & $6/9$\\ 
\hline
Other & $p_i=0$ & $0$ &$0$ & $0$ &$0$ \\ 
\hline
$\sum$ & $1$ & $7/9$ &$15/9$ & $8/9$ &$1$ \\ 
\hline
\end{tabular}
\end{center}
\end{table}

This allows to evaluate $d(A_2,A_i)$ for every $i\neq 2$:  

$d(A_2,A_1)=15/9-7/9=8/9$ 

$d(A_2,A_3)=15/9-8/9=7/9 $

$d(A_2,A_4)=15/9-1=6/9$

From here we deduce that $M_1=\mbox{max}\{d(A_2,A_i)~|~i\neq k\} = d(A_2,A_1)=8/9$ and since $w_3\mathfrak{C}_{A_2}=5/9<8/9<10/9=w_2\mathfrak{C}_{A_2},$ we are in case $(b)$ of Step $(2)$ above. The next step is to compute 

$$t=\frac{8/9-1\cdot5/9}{2-1}-\varepsilon=3/9-\varepsilon$$
and update the intermediate profiles accordingly (see Table \ref{ExampleInterm2Table}). At this point we have reduced the problem to the case of three alternatives $(A_2, A_3, \mbox{ and } A_4)$ and positional rule determined by the weight vector $(3,w_{2,3},0)$, where
$$w_{2,3}=\frac{(2/9+\varepsilon)\cdot1+(3/9-\varepsilon)\cdot 2}{5/9}=8/5-9\varepsilon/5=1.6-9\varepsilon/5.$$

\begin{table}[H]
\caption{Step $2$ intermediate preferences and corresponding point distribution}
\label{ExampleInterm2Table}
\begin{center}
\vspace{-0.3cm}\begin{tabular}{ |c|c|c|c|c|c|} 
\hline
Preference & Fraction & $A_1$ &  $A_2$ &  $A_3$ & $A_4$ \\ 
\hline
$(A_2,A_1,?,?)$ & $t=3/9-\varepsilon$ & $6/9-2\varepsilon$ &$1-3\varepsilon$ & $0$ & $0$\\  
\hline
$(A_2,?,A_1,?)$ & $2/9+\varepsilon$ & $2/9+\varepsilon$ &$6/9+3\varepsilon$ & $0$ & $0$\\  
\hline
$(A_3,A_1,A_4,A_2)$ & $p_{14}=1/9$ & $2/9$ &$0$ & $3/9$ &$1/9$ \\ 
\hline
$(A_3,A_4,A_1,A_2)$ & $p_{17}=1/9$ & $1/9$ &$0$ & $3/9$ &$2/9$ \\ 
\hline
$(A_4,A_1,A_3,A_2)$ & $p_{20}=2/9$ & $4/9$ &$0$ & $2/9$ & $6/9$\\ 
\hline
Other & $p_i=0$ & $0$ &$0$ & $0$ &$0$ \\ 
\hline
$\sum$ & $1$ & $15/9-\varepsilon$ &$15/9$ & $8/9$ &$1$ \\ 
\hline
\end{tabular}
\end{center}
\end{table}

At this stage the question of manipulability is resolved according to whether or not the inequality $$d(A_2,A_3)+d(A_2,A_4)>w_{2,3}\mathfrak{C}_{A_2}$$ is satisfied. Plugging in the corresponding values we obtain $$7/9+6/9=13/9>8/9-\varepsilon=5/9(1.6-9\varepsilon/5),$$ securing the affirmative answer. 

One possible arrangement of the complete profiles is presented in Table \ref{ExampleInterm3Table} and this concludes our example.

\begin{table}[H]
\caption{Complete strategic profile and resulting point distribution}\label{ExampleInterm3Table}
\begin{center}
\vspace{-0.7cm}\begin{tabular}{ |c|c|c|c|c|c|} 
\hline
Preference & Fraction & $A_1$ &  $A_2$ &  $A_3$ & $A_4$ \\ 
\hline
$(A_2,A_1,A_3,A_4)$ & $(3/9-\varepsilon)\cdot3/5$ & $18/45-6\varepsilon/5$ &$27/45-9\varepsilon/5$ & $9/45-3\varepsilon/5$ & $0$\\  
\hline
$(A_2,A_3,A_1,A_4)$ & $(2/9+\varepsilon)\cdot3/5$ & $6/45+3\varepsilon/5$ &$18/45+9\varepsilon/5$ & $12/45+6\varepsilon/5$ & $0$\\  
\hline
$(A_2,A_1,A_4,A_3)$ & $(3/9-\varepsilon)\cdot2/5$ & $12/45-4\varepsilon/5$ &$18/45-6\varepsilon/5$ & $0$ & $6/45-2\varepsilon/5$\\  
\hline
$(A_2,A_4,A_1,A_3)$ & $(2/9+\varepsilon)\cdot2/5$ & $4/45+2\varepsilon/5$ &$12/45+6\varepsilon/5$ & $0$ & $8/45+4\varepsilon/5$\\  
\hline
$(A_3,A_1,A_4,A_2)$ & $p_{14}=1/9$ & $2/9$ &$0$ & $3/9$ &$1/9$ \\ 
\hline
$(A_3,A_4,A_1,A_2)$ & $p_{17}=1/9$ & $1/9$ &$0$ & $3/9$ &$2/9$ \\ 
\hline
$(A_4,A_1,A_3,A_2)$ & $p_{20}=2/9$ & $4/9$ &$0$ & $2/9$ & $6/9$\\ 
\hline
Else & $p_i=0$ & $0$ &$0$ & $0$ &$0$ \\ 
\hline
$\sum$ & $1$ & $75/45-\varepsilon$ &$75/45$ & $61/45+3\varepsilon/5$ &$59/45+2\varepsilon/5$ \\ 
\hline
\end{tabular}

\end{center}
\end{table}

\end{ex}

Since $w_2 = w_3 = \hdots=w_m=0$ under the Plurality rule, the  $SI$ inequalities reduce to $d(A_k,A_i)>0 \mbox{ for } i\in\{1,\hdots,m~|~i\neq k\}$. Then the following corollary can be deduced from Theorem \ref{ManipGeneralM}. 

\begin{cor}
\label{PlurCor}
Consider the Plurality voting rule with $m\geq 3$ alternatives. A profile $p=(p_1,\hdots,p_{m!})$ is manipulable by a coalition with unifying alternative $A_k$ if and only if $p$ satisfies the following system of inequalities:
\begin{equation}
\label{ManipGeneralPlur}
\begin{cases} \sum\limits_{i=1}^{m!}p_i=1 \\
p_i\geq 0\\
\mbox{The initial arrangement is } (A_1,A_2,\hdots,A_m) \\ 
d(A_k,A_i)>0 \mbox{ for } i\in\{1,\hdots,m~|~i\neq k\}  ~~~~~~~~~(SI).\\ 
\end{cases}
\end{equation} 
\end{cor}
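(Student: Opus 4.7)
The plan is to specialize Theorem~\ref{mainThm} to the Plurality weight vector $w=(1,0,\hdots,0)$ and verify that the SI system collapses to the single condition stated. Since the PI part of \eqref{ManipGeneralM} and \eqref{ManipGeneralPlur} are identical, the only work is on the SI part.

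First, I would substitute $w_2=w_3=\hdots=w_m=0$ into every inequality of the SI system. Each right-hand side is $\mathfrak{C}_{A_k}$ times a sum of a tail of $(w_2,\hdots,w_m)$, so every right-hand side becomes $0$. The SI system therefore reduces to
\begin{equation*}
\sum_{i\neq k}d(A_k,A_i)>0,\quad \sum_{i\neq k}d(A_k,A_i)-M_1>0,\quad\hdots,\quad \sum_{i\neq k}d(A_k,A_i)-\sum_{j=1}^{m-3}M_j>0,\quad M_{m-1}>0.
\end{equation*}

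Next, I would show this chain is equivalent to $d(A_k,A_i)>0$ for every $i\neq k$. By definition $M_{m-1}=\min\{d(A_k,A_i):i\neq k\}$, so the last inequality $M_{m-1}>0$ alone already forces every $d(A_k,A_i)$ to be strictly positive, giving one direction. Conversely, if $d(A_k,A_i)>0$ for all $i\neq k$, then every $M_j>0$ and each partial sum $\sum_{i\neq k}d(A_k,A_i)-\sum_{j=1}^{t}M_j$ is precisely the sum of the $(m-1-t)$ smallest values in $\{d(A_k,A_i):i\neq k\}$, each positive; hence all reduced SI inequalities hold automatically. Combining both directions gives the corollary.

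There is no real obstacle: the argument is a direct specialization of Theorem~\ref{mainThm} followed by the observation that, when the thresholds on the right-hand sides vanish, the cascade of inequalities indexed by the ordered differences $M_1\geq M_2\geq\hdots\geq M_{m-1}$ is controlled by its weakest link, namely $M_{m-1}>0$.
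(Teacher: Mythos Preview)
Your proposal is correct and follows essentially the same approach as the paper: the paper simply observes that under Plurality $w_2=\hdots=w_m=0$, so the SI inequalities of Theorem~\ref{mainThm} reduce to $d(A_k,A_i)>0$ for all $i\neq k$. Your argument spells out this reduction in more detail, in particular making explicit why the cascade of inequalities with zero right-hand sides is equivalent to the single condition $M_{m-1}>0$.
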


We would like to bring the reader's attention to the fact that the $SI$ inequalities in \eqref{ManipGeneralPlur} are the same as inequalities $(2)$ in Theorem $1$ of  \cite{LM87}, where the characterization of coalitional manipulable profiles under Plurality rule was first derived.  We also bring the reader's attention to the fact that  \cite{Lepelley3}  and  \cite{Favardin} deal with the coalitional manipulation of the Antiplurality rule ($\lambda=1$)  and  the  Borda rule ($\lambda=2$), respectively, in the case of three-alternative elections.  The approach that we use in this paper to derive our list of inequalities is different from those used by  \cite{Lepelley3}  and  \cite{Favardin}. This explains  why the systems given in these papers (Lemma 2 in \cite{Lepelley3} and Lemma 4 in \cite{Favardin})  are not a direct consequence of the systems given in Theorem \ref{mainThm}.  However, our results  in  Section \ref{sectionproba}  for the case of three alternatives  allow us to show that the two characterizations are equivalent.  

As already noticed, the approach that we consider in the main result of the paper can be adapted when only manipulation by small coalitions is considered. Indeed, under some profiles a small coalition may reverse the relative ranking of two alternatives as it is shown in the following example. 

\begin{ex} Let $m=3$ and the scoring rule be Borda rule having $w=(2,1,0)$. As in Example \ref{BordaManipEx},  we set  $p_{1}=5/9$,  $p_{3}=4/9$, and  $p_{j}=0$  for $j\neq 1,3$. The initial arrangement is  $(A_1,A_2,A_3)$ since alternative $A_1$ wins and gets $14/9$ points, $A_2$ gets $13/9$ points,  and finally $A_3$ gets $0$ points.  But, if a fraction $\mathfrak{p}=2/9$ of voters chose $(A_2,A_3,A_1)$ instead of $(A_2,A_1,A_3)$, and the others kept their preferences unchanged, then the final result would be in favor of alternative $A_2$ since $A_1$ gets $12/9$ points, $A_2$ gets $13/9$ points, and $A_3$ gets $2/9$ points.  In other words, a manipulation in favor of $A_2$ is also possible with a coalition of voters smaller than the maximal size (see Example \ref{BordaManipEx} for a manipulation by a coalition of a maximal size).     \end{ex}

\begin{cor}
Let $m\geq 3$ and suppose that the initial arrangement is  $(A_1,A_2,\hdots,A_m)$ under the positional voting rule having weight vector $w=(w_1,w_2,\hdots,w_{m})$. A profile $p=(p_1,\hdots,p_{m!})$	is manipulable by some coalition not exceeding a proportion $0<\mathfrak{p}\leq 1$ of the electorate with unifying alternative $A_k$ if and only if $p$ satisfies the following system of inequalities: 
	
	\begin{equation}
	\label{ManipGeneralMSmallerCoal}
	\begin{cases}
	\mbox{Preliminary Inequalities (PI)}
	\\
	~~~~~~~~~~~~\begin{cases}
	
	\sum\limits_{i=1}^{m!}p_i=1 \\
	p_i\geq 0\\
	0\leq \widetilde{p}_j\leq p_j \mbox{ for } j, \mbox{ s.t. } A_1\prec_j A_k\\
	\sum\limits_{A_1\prec_j A_k}\widetilde{p}_j=\mathfrak{p} \\
	\mbox{The initial arrangement is } (A_1,A_2,\hdots,A_m) 
	\end{cases}\\
	\\
	\mbox{Strategic Inequalities (SI)}
	\\
	~~~~~~~~~~~~\begin{cases}
	
	\sum\limits_{i\neq k}d(A_k,A_i)>(w_2+\hdots+w_{m-1}+w_{m})\mathfrak{p}\\ \sum\limits_{i\neq k}d(A_k,A_i)-M_1>(w_3+\hdots+w_{m-1}+w_{m})\mathfrak{p}\\\hdots \\\sum\limits_{i\neq k}d(A_k,A_i)-\sum\limits_{j=1}^{m-4}M_j>(w_{m-2}+w_{m-1}+w_m)\mathfrak{p}\\  
	\sum\limits_{i\neq k}d(A_k,A_i)-\sum\limits_{j=1}^{m-3}M_j>(w_{m-1}+w_{m})\mathfrak{p}\\ 
	M_{m-1}>w_{m}\mathfrak{p}, \\
	\end{cases}
	\end{cases}
	\end{equation}
	where by $A_1\prec_j A_k$ we understand that $A_k$ has a higher ranking than $A_1$ in preference of type  $j$ and $\widetilde{p}_j$ denotes the share of `participants'  with preference of this type. The added $PI $ inequalities correspond to the fact that each $\widetilde{p}_j$ can not exceed $p_j$. As it is sufficient to consider coalitions of the largest allowed proportion, we add the equality $\sum\limits_{A_1\prec_j A_k}\widetilde{p}_j=\mathfrak{p}$. 
\end{cor}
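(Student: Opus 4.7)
The plan is to reduce the corollary to Theorem \ref{mainThm} by viewing the selected coalition participants, of total proportion $\mathfrak{p}$, as playing the role of the maximal coalition in the theorem.

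First I would interpret the added PI inequalities: the condition that $\widetilde{p}_j$ is supported on preferences with $A_1\prec_j A_k$ captures the fact that only voters who strictly prefer $A_k$ to the current winner $A_1$ have any incentive to join a coalition aimed at replacing $A_1$ by $A_k$; the constraints $0\leq \widetilde{p}_j\leq p_j$ express that the coalition is drawn from existing voters of the given profile; and $\sum_{A_1\prec_j A_k}\widetilde{p}_j=\mathfrak{p}$ fixes its exact size. Since shrinking a successful coalition can only worsen its position, it suffices to treat the case of proportion exactly $\mathfrak{p}$, which justifies passing from an inequality to an equality in the sum.

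Once a feasible choice of the $\widetilde{p}_j$'s is fixed, the task of jointly updating the coalition members' full preferences so as to make $A_k$ the winner is exactly the one handled in Theorem \ref{mainThm}, with the sole difference that the relevant coalition size is $\mathfrak{p}$ rather than $\mathfrak{C}_{A_k}$. I would therefore repeat, verbatim, the inductive step-by-step procedure from that proof: start by updating every coalition member's intermediate preference to $(A_k,?,\ldots,?)$, let $A_s$ realise $M_1=d(A_k,A_s)$, and split into case (a) when $M_1>w_2\mathfrak{p}$ and case (b) when $w_\sigma\mathfrak{p}<M_1<w_{\sigma-1}\mathfrak{p}$, redefining the weighted-average weight as $w_{\sigma-1,\sigma}:=\bigl((\mathfrak{p}-t)w_{\sigma-1}+tw_{\sigma}\bigr)/\mathfrak{p}$. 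The reduction yields, after elimination of $A_s$, a system of the same shape for the positional rule with $m-1$ alternatives, weight vector $w'$ and the same coalition proportion $\mathfrak{p}$, closing the induction on $m$. Necessity is argued symmetrically, by noting that the coalition has in total exactly $(w_2+\ldots+w_m)\mathfrak{p}$ points to redistribute among the alternatives other than $A_k$, and that at every stage of this redistribution $A_k$ must remain strictly ahead of the alternatives not yet exhausted.

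The only subtlety I expect is purely one of bookkeeping: one must check that $d(A_k,A_i)$ is interpreted throughout as the point difference computed \emph{after} the $\mathfrak{p}$-coalition has contributed $w_1$ to $A_k$ (and nothing to the others) while every non-coalition voter votes sincerely, so that the $d(A_k,A_i)$ depend linearly on both the $p_j$'s and the chosen $\widetilde{p}_j$'s. Once this is made unambiguous, no step of the original argument in Theorem \ref{mainThm} uses any property of $\mathfrak{C}_{A_k}$ beyond its being the total mass of the coalition, so every inequality, every case distinction and every inductive reduction transfers verbatim with $\mathfrak{p}$ in place of $\mathfrak{C}_{A_k}$.
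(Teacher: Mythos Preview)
Your proposal is correct and follows essentially the same approach as the paper: the paper states this result as an immediate corollary of Theorem~\ref{mainThm} without a separate proof, merely noting in the statement itself that the added PI constraints encode the bound on coalition size and that it suffices to consider coalitions of the largest allowed proportion. Your argument makes this reduction explicit---replacing $\mathfrak{C}_{A_k}$ by $\mathfrak{p}$ throughout the inductive procedure and observing that no step of the proof of Theorem~\ref{mainThm} uses anything about $\mathfrak{C}_{A_k}$ beyond its being the total coalition mass---which is precisely what the paper intends.
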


The next remark concerns the case where the number of voters is small.

\begin{rmk}
	\label{finitermk}
Theorem \ref{mainThm} does not hold true in case of finite number of voters. Consider the case of $m=3$ alternatives $A_1$, $A_2$  and  $A_3$ and the scoring rule with weight vector $w=(1,0.9,0)$. Let there be $n=21$ voters and the profile $p=(6,7,8,0,0,0)$. Then initially $A_1$ gets $20.2$ points, $A_2$ gets $13.4$ points and $A_3$ gets $6.3$ points satisfying the arrangement assumptions. Next we show that the remaining two inequalities in the system, responsible for the possibility of manipulation by $Coal_{A_2}$,  hold true for the profile $p$ as well. Indeed, $d(A_2,A_1)=0.4>0$ and $d(A_2,A_1)+d(A_2,A_3)=7.5>7.2=\lambda \mathfrak{C}_{A_2}$. However, if a single member of the coalition updated his/her profile to $(A_2,A_1,A_3)$ then $\lambda=0.9>0.4=d(A_2,A_1)$ would lead to alternative $A_1$ getting ahead of $A_2$. On the other hand it is not possible for all $8$ people in $Coal_{A_2}$ to update their profiles to $(A_2,A_3,A_1)$ either, as $d(A_2,A_3)=7.1<7.2=\lambda \mathfrak{C}_{A_2}$. Hence, the profile is not manipulable by $Coal_{A_2}$.
\end{rmk}

We finish this part with a remark on the manipulability of the Antiplurality rule.

\begin{rmk}\label{NonManipRMK}
Suppose that the initial arrangement  is  $(A_1,A_2,\hdots,A_m)$. The group of voters having the last alternative prior in their preferences $(Coal_{A_m})$  can never manipulate an election under Antiplurality  rule whichever the number of alternatives $m$ is. Otherwise, as after such a manipulation the last alternative $A_m$ would have the same number of points (his number of points is not influenced by the manipulation) all the other alternatives would have to undergo a simultaneous loss of points. That is clearly impossible, because the number of rearranged points equals to the number of points initially arranged between the alternatives  $A_1,\hdots,A_{m-1}$. 
\end{rmk}

\section{Some numerical results}\label{sectionproba}

The aim of this section is to compute the share of manipulable profiles for three well-studied scoring rules under the assumption of Impartial and Anonymous Culture (IAC) in  the limit case as $n\to \infty$.  We consider this hypothesis because it has been widely used in many papers that we cite in the introduction.  When $n\to \infty$,  this assumption indicates  that all profile vectors of the form $p=(p_1,\hdots,p_{m!})$,  with each $p_i\geq 0$ and $\sum\limits_{i=1}^{m!}p_i= 1$, are equally likely to occur. This probability distribution was first introduced by  \cite{GF}. For more details on the IAC condition and others, we refer the reader to \cite{DM20,GehrleinLepelley1,GehrleinLepelley2}.  

It is well-known that the limiting probability of every voting event (here the manipulation of voting outcomes) under IAC can be simply reduced to volume computations.   Since the normalized profile vectors are of the form $p=(p_1,\hdots,p_{m!})$ with each $p_i\geq 0$ and $\sum\limits_{i=1}^{m!}p_i= 1$, the polytope that describes all possible profile vectors defines the standard simplex $\triangle \subset \mathbb{R}^{m!}$. The volume of $\triangle \subset \mathbb{R}^d$ is equal to $\frac{1}{(d-1)!}$, which, since in our case $d=m!$, becomes $\frac{1}{m!-1}$.  What remains to compute is the volumes of the polytopes $\mathcal{P}_{A_k}$ (with $k=1,\dots,m$)   cutting out  the regions of profiles manipulable by $Coal_{A_k}$ inside the standard simplex $\triangle$. For instance, in the case of three alternatives, the volumes of the polytopes $\mathcal{P}_{A_2}$   and  $\mathcal{P}_{A_3}$ cutting out  the regions of profiles manipulable by $Coal_{A_2}$ and $Coal_{A_3}$ (see systems in (\ref{ManipEq}))  inside the standard simplex  $\triangle$ are required.  Then the total volume of the region of manipulable profiles is given by $Vol(\mathcal{P}_{A_2})+Vol(\mathcal{P}_{A_3})- Vol(\mathcal{P}_{A_2}\cap\mathcal{P}_{A_3})$.  Afterwards the result is divided by the volume of the standard simplex  and multiplied by $6$ (all possible final arrangements of alternatives) to produce the final answer.
The approach described in \cite{CGZ}\footnote{See also \cite{MoyouwouT2017}.}  tells us an effective way to tackle this problem for three-alternative elections. When we have more than three alternatives, however, those methods  become too complicated in practice.  Indeed, due to dimension reasons, the precise probabilities  seem to be difficult to obtain for $m=4$ and it is impossible to compute for $m\geq 5$ at this stage using the existing algorithms.

In case the precise volume cannot be found, it is natural to obtain a sufficiently good approximation of it. One of the most commonly used procedures for this purpose is the Monte Carlo volume estimation.  The share of manipulable outcomes for Plurality, Antiplurality and Borda rule in case of $m=3,4,5$ alternatives are presented in Table \ref{3CandTable} (exact results),  and Tables \ref{4CandTable} and  \ref{5CandTable} (simulated results).

\begin{table}[H]
\caption{Share of manipulable outcomes with $m=3$ alternatives}
\label{3CandTable}
\begin{center}
\vspace{-0.3cm}\begin{tabular}{ |c|c|c|c| } 
\hline
Rule & Manipulable &  by $Coal_{A_2}$ & by $Coal_{A_3}$ \\ 
\hline
Plurality &  $29.17 \%$  & $24.65 \%$ & $15.63 \%$\\ 
\hline
Antiplurality  & $51.85\%$  & $51.85\%$ &      $0\%$ \\ 
\hline
Borda  & $50.25\%$ & $47.71\%$  & $9.71\%$  \\ 
\hline
\end{tabular}
\end{center}
\end{table}

\begin{table}[H]
\caption{Share of manipulable outcomes with $m=4$ alternatives}
\label{4CandTable}
\begin{center}
\vspace{-0.3cm}\begin{tabular}{ |c|c|c|c|c| } 
\hline
Rule & Manipulable &  by $Coal_{A_2}$ & by $Coal_{A_3}$ & by $Coal_{A_4}$\\ 
\hline
Plurality & $87.38\%$ & $83.65\%$ &$73.87\%$ & $63.53\%$ \\ 
\hline
Antiplurality  & $87.13\%$ & $86.47\%$ &$22.83\%$ & $0 \%$ \\ 
\hline
Borda  & $95.65\%$ & $95.03\%$  & $79.16\%$ & $43.38\%$ \\ 
\hline
\end{tabular}
\end{center}
\end{table}

\begin{table}[H]
\caption{Share of manipulable outcomes with $m=5$ alternatives}\label{5CandTable}
\begin{center}
\vspace{-0.3cm}\begin{tabular}{ |c|c|c|c|c|c| } 
\hline
Rule & Manipulable & by $Coal_{A_2}$ & by $Coal_{A_3}$ & by $Coal_{A_4}$ & by $Coal_{A_5}$\\ 
\hline
Plurality & $99.51\%$ & $99.37\%$ &$99.05\%$ & $98.57\%$ & $98.04\%$ \\ 
\hline
Antiplurality & $97.15\%$ & $96.79\%$ &$54.78\%$ & $ 6.52\% $ & $0\%$ \\ 
\hline
Borda  & $99.76\%$ & $99.23\%$  & $98.95\%$ & $98.18\%$& $97.83\%$ \\ 
\hline
\end{tabular}

\end{center}
\end{table}

Our numerical results allow us to conclude that (i) moving from three to four and five alternatives very significantly increases the vulnerability to coalitional manipulation under all the scoring rules that we consider and (ii) the hierarchy of the scoring rules with respect to the susceptibility to coalitional manipulability changes when the number of alternatives increases: The Antiplurality rule seems to be the most performing when the number of alternatives is four and five, whereas it is the most prone to manipulation in the case of three alternatives. 

Recall that  \cite{MoyouwouT2017}  determined necessary and sufficient conditions for a given profile to be coalitionally manipulable under a general scoring rule with three alternatives. The share of manipulable profiles has also been given in this paper under the IAC condition.  Our results for Table  \ref{3CandTable}  coincide with their results which were already obtained  in the literature as in  \cite{LM87,Lepelley},  and \cite{Wilson1}. The case $m=4$ for the Plurality rule was studied in \cite{Ouafdi1}. The precise value of $87.28\%$ for general manipulability was obtained (this corresponds to $87.38\%$ appearing in the top left corner of Table \ref{4CandTable}). In other words, our approximation differs by $0.1\%$. Note also that the zero value in the last column of the Antiplurality rule is consistent with Remark \ref{NonManipRMK}.

Finally, the Python code for carrying out the computation of the share of manipulable outcomes in case of Borda rule with $m=4$ alternatives is provided in Appendix $A$.   A detailed outline on the execution complimentary to the comments inside the code is given. The recipe for finding the corresponding results for other scoring voting rules with  than $m\geq 4$ alternatives is provided.  

\section{Conclusion}\label{concl}

At least three main extensions can emerge from the present study. First, it would be interesting to investigate other voting rules not yet considered, especially in the class of the so-called \emph{iterative scoring rules}, also known as \emph{multi-stage} or \emph{sequential elimination scoring rules}. This class of multi-round procedures are based on the same scoring principle previously defined but proceed by eliminating one or more alternatives at each round, until there is only one alternative left who is considered as  the winner. Second, taking into account the
possibility of reactions of voters outside the coalition that manipulates the vote may also be a good extension of this study. This appears to have a considerable impact on the manipulability shares and on the induced hierarchy of voting rules when analyzing strategic voting (see for instance \cite{FavardinLepelley}). Third, our probabilistic investigations may also be extended to the framework with small sizes of manipulating coalitions using  Corollary  \ref{ManipGeneralMSmallerCoal}.

\subsection*{Appendix A: Python code used for computations in case of Borda count with $m=4$ alternatives} 

In this section we present the code (written in Python) used for carrying out the algorithm for finding the share of manipulable outcomes in case of Borda rule with $m=4$ alternatives.  With a few modifications to the code provided one can  perform analogous computations for scoring voting rules with different weights. 

\begin{enumerate}
	\item[\textit{Step 1}.] Create the array of possible preferences with the alternatives corresponding to numbers $0$ to $3$ arranged in increasing order. This array has $24$ elements. Each element of the preferences' array is an array of its own with $4$ elements (see lines $14-25$ in the code). 
	\begin{align*} 
	&\mbox{preferences}[0]=\{[0],[1],[2],[3]\},\\
	&\mbox{preferences}[1]=\{[0],[1],[3],[2]\},\\
	&\hdots\\
	&\mbox{preferences}[23]=\{[3],[2],[1],[0]\}.
	\end{align*}
	After this the point distribution array is created, i.e., the element $(i,j)$ (here $0\leq i \leq 23$ and  $0\leq j \leq 3$) is the number of points alternative $j$ gets from preference $i$ (lines $27-37$.)
	\item[\textit{Step 2}.] Now we are in position to store the inequalities responsible for the initial arrangement of alternatives. As usual we assume that this arrangement is $(0,1,2,3)$ and the $3$ PI inequalities responsible for that are obtained (lines $3-8$ and $38-41$).
	\item[\textit{Step 3}.] Next generate a sample of $n$  (`trials' in the code) random points in the simplex $\triangle$. For this choose $24$ random numbers on the closed interval $[0;1]$, arrange them in a nondecreasing order and take the subsequent differences. The points created this way have a uniform distribution in  the simplex $\triangle$ (see Chapter $\rom{5}.2$ in \cite{Devroye}).  The corresponding lines of the code are $43-54$.
	\item[\textit{Step 4}.] The coalition arrays are established. These are $3$ arrays of $24$ elements, each element equal to $1$ or $0$, depending on whether preference $i$ belongs to the coalition or not (lines $55-66$).
	\item[\textit{Step 5}.] The difference arrays are generated (lines $67-105$). 
	\item[\textit{Step 6}.] The remaining part of the code stores the SI inequalities \eqref{ManipGeneralM}  responsible for the profile being manipulable by each of the coalitions separately (recall that  inequalities \eqref{ManipGeneralM} are written in terms of the elements of difference arrays established on the previous step) and then checks for all sample points to determine how many of them satisfy the initial arrangement conditions and are manipulable by at least one coalition.
\end{enumerate}

\begin{small}
\begin{lstlisting}[language=Python]
import random

InitArrangement = 3 * [0]

for i in range(0, 3):
	InitArrangement[i] = []
	for j in range(0, 24):
		InitArrangement[i].append(0)
# generating array of profiles
preferences = []
for i in range(1, 25):
	preferences.append(i)
counter = 0
for i in range(0, 4):
	for j in range(0, 4):
		for k in range(0, 4):
			for l in range(0, 4):
				if (i != j and i != k and i != l
				and j != k and j != l and k != l):
					preferences[counter] = []
					preferences[counter].append(i)
					preferences[counter].append(j)
					preferences[counter].append(k)
					preferences[counter].append(l)
					counter += 1
# generating array of points distribution according to profiles
PointDistArr = []
for i in range(0, 4):
	PointDistArr.append(i)
for i in range(0, 4):
	PointDistArr[i] = []
for j in range(0, 24):
	PointDistArr[i].append(0)
for i in range(0, 24):
	for j in range(0, 4):
		s = preferences[i][j]
		PointDistArr[s][i] = 3 - j
for j in range(0, 24):  # Initial arrangement is A>B>C>D
	InitArrangement[0][j] = PointDistArr[0][j] - PointDistArr[1][j]
	InitArrangement[1][j] = PointDistArr[1][j] - PointDistArr[2][j]
	InitArrangement[2][j] = PointDistArr[2][j] - PointDistArr[3][j]

trials = 8000000  # number of points in the sample
TrialPts = trials * [0]
S = 24 * [0]
# creating random sample of points in the simplex
for i in range(0, trials):
	TrialPts[i] = []
	for j in range(0, 23):
		S[j] = random.uniform(0, 1)
	S[23]=1
	TrialPts[i] = sorted(S)
	for k in range(1, 24):
		TrialPts[i][24 - k] -= TrialPts[i][23 - k]
# initiating coalition members arrays (1 if the preference
# belongs to the coalition and 0 otherwise)
CoalB = 24 * [0]
CoalC = 24 * [0]
CoalD = 24 * [0]
for j in range(0, 24):
	if PointDistArr[0][j] < PointDistArr[1][j]:
		CoalB[j] = 1
	if PointDistArr[0][j] < PointDistArr[2][j]:
		CoalC[j] = 1
	if PointDistArr[0][j] < PointDistArr[3][j]:
		CoalD[j] = 1
dBA = 24 * [0]
dBC = 24 * [0]
dBD = 24 * [0]
dCA = 24 * [0]
dCB = 24 * [0]
dCD = 24 * [0]
dDA = 24 * [0]
dDB = 24 * [0]
dDC = 24 * [0]
# computing d(B,A), d(B,C) and d(B,D)
for j in range(0, 24):
	if CoalB[j] == 1:
		dBA[j] = 3
		dBC[j] = 3
		dBD[j] = 3
	else:
		dBA[j] = PointDistArr[1][j] - PointDistArr[0][j]
		dBC[j] = PointDistArr[1][j] - PointDistArr[2][j]
		dBD[j] = PointDistArr[1][j] - PointDistArr[3][j]
# computing d(C,A), d(C,B) and d(C,D)
for j in range(0, 24):
	if CoalC[j] == 1:
		dCA[j] = 3
		dCB[j] = 3
		dCD[j] = 3
	else:
		dCA[j] = PointDistArr[2][j] - PointDistArr[0][j]
		dCB[j] = PointDistArr[2][j] - PointDistArr[1][j]
		dCD[j] = PointDistArr[2][j] - PointDistArr[3][j]
# computing d(D,A), d(D,B) and d(D,C)
for j in range(0, 24):
	if CoalD[j] == 1:
		dDA[j] = 3
		dDB[j] = 3
		dDC[j] = 3
	else:
		dDA[j] = PointDistArr[3][j] - PointDistArr[0][j]
		dDB[j] = PointDistArr[3][j] - PointDistArr[1][j]
		dDC[j] = PointDistArr[3][j] - PointDistArr[2][j]

counter = 0
B = 3*[0]
C = 3*[0]
D = 3*[0]
Sb = trials*[0]
Sc = trials*[0]
Sd = trials*[0]
BmanipCheck = False
CmanipCheck = False
DmanipCheck = False
CoalBsize = 0
CoalCsize = 0
CoalDsize = 0
for i in range(0, trials):
	check = False
	CoalBsize = 0
	CoalCsize = 0
	CoalDsize = 0
	Sb[i] = []
	Sc[i] = []
	Sd[i] = []
	for j in range(0, 3):
		s = 0
		check = True
		for k in range(0, 24):
			s += TrialPts[i][k] * InitArrangement[j][k]
		if s < 0:
			check = False
			break
	if check is True:
		BmanipCheck = False
		CmanipCheck = False
		DmanipCheck = False
		for j in range(0, 3):
			B[j] = 0
			C[j] = 0
			D[j] = 0
		for k in range(0, 24):
			CoalBsize += TrialPts[i][k] * CoalB[k]
			CoalCsize += TrialPts[i][k] * CoalC[k]
			CoalDsize += TrialPts[i][k] * CoalD[k]
			B[0] += TrialPts[i][k] * dBA[k]
			B[1] += TrialPts[i][k] * dBC[k]
			B[2] += TrialPts[i][k] * dBD[k]
			C[0] += TrialPts[i][k] * dCA[k]
			C[1] += TrialPts[i][k] * dCB[k]
			C[2] += TrialPts[i][k] * dCD[k]
			D[0] += TrialPts[i][k] * dDA[k]
			D[1] += TrialPts[i][k] * dDB[k]
			D[2] += TrialPts[i][k] * dDC[k]
		Sb[i] = sorted(B)
		Sc[i] = sorted(C)
		Sd[i] = sorted(D)
		
		if (Sb[i][0] > 0) and (Sb[i][0] + Sb[i][1] > CoalBsize) and (Sb[i][0] + Sb[i][1] + Sb[i][2] > 3 * CoalBsize):
			BmanipCheck = True
		if (Sc[i][0] > 0) and (Sc[i][0] + Sc[i][1] > CoalCsize) and (Sc[i][0] + Sc[i][1] + Sc[i][2] > 3 * CoalCsize):
			CmanipCheck = True
		if (Sd[i][0] > 0) and (Sd[i][0] + Sd[i][1] > CoalDsize) and (Sd[i][0] + Sd[i][1] + Sd[i][2] > 3 * CoalDsize):
			DmanipCheck = True
	
	if check is True and (BmanipCheck is True or CmanipCheck is True or DmanipCheck is True):
	counter += 1
ans = 0.0
ans = float(24 * counter / 10000)
print(str(ans) + '%')
\end{lstlisting}
\end{small}

\bibliographystyle{amsplain}

\begin{thebibliography}{99}

\bibitem[Aleskerov and Kurbanov (1999)]{AlKurb}Aleskerov, F. and Kurbanov, E. (1999) Degree of manipulability of social choice procedures. \textit{Current Trends in Economics} 8:13--27.

\bibitem[Chamberlin (1985)]{Chamberlin}Chamberlin, J. (1985) An investigation into the relative manipulability of four voting systems. \textit{Behavioral Science} 30(4):195--203.

\bibitem[Cervone et al. (2005)]{CGZ}Cervone, D.P., Gehrlein, W.V. and Zwicker, W.S. (2005) What scoring rule maximizes Condorcet efficiency under IAC. 
\textit{Theory and Decision} 58:145--185.

\bibitem[Devroye (1986)]{Devroye} Devroye, L. (1986)
	Nonuniform random variate generation,
	\textit{Springer-Verlag, New York}


\bibitem[Diss (2015)]{D15}Diss, M. (2015) Strategic manipulability of self--selective social choice rules.  \textit{Annals of Operations Research}  229:347--376.

\bibitem[Diss and Merlin (2020)]{DM20}Diss, M. and  Merlin, V. (2020)  Evaluating voting systems with probability models, essays by and in honor of William Gehrlein and Dominique Lepelley.  Studies in Choice and Welfare -  Springer.

\bibitem[Favardin and Lepelley (2006)]{FavardinLepelley}Favardin, P. and  Lepelley, D. (2006) Some further results on the manipulability of social choice rules. \textit{Social Choice and Welfare}  26:485--509.

\bibitem[Favardin et al. (2002)]{Favardin}Favardin, P., Lepelley, D. and Serais, J. (2002) Borda rule, Copeland method and strategic manipulation. \textit{Review of Economic Design} 7(2):213--228.

\bibitem[Fishburn and Gehrlein (1976)]{GF}Fishburn, P.C. and Gehrlein, W.V. (1976) Condorcet's paradox and anonymous preference profiles. \textit{Public Choice} 26:1--18.

\bibitem[Gehrlein and  Lepelley (2011)]{GehrleinLepelley1}Gehrlein, W.V. and  Lepelley, D. (2011)  \textit{Voting paradoxes and group coherence}. Studies in Choice and Welfare - Springer.

\bibitem[Gehrlein and  Lepelley (2017)]{GehrleinLepelley2}Gehrlein, W.V. and  Lepelley, D. (2017)  \textit{Elections, Voting Rules and Paradoxical Outcomes}. Studies in Choice and Welfare - Springer.

\bibitem[Gehrlein et al. (2013)]{GehrleinMoyouwouLepelley}Gehrlein, W.V., Moyouwou, I. and  Lepelley, D. (2013)  The impact of voters' preference diversity on the probability of some electoral outcomes.  \textit{Mathematical Social Sciences} 66:352--365.

\bibitem[Gibbard (1973)]{Gibbard}Gibbard, A. (1973) Manipulation of voting schemes: a general result. \textit{Econometrica} 41:587--601.

\bibitem[El Ouafdi et al. (2020a)]{Ouafdi1}El Ouafdi, A., Lepelley, D. and   Smaoui, H. (2020a) Probabilities of electoral outcomes: from three-alternative to four-alternative elections.  \textit{Theory and Decision}  88:205-229.

\bibitem[El Ouafdi et al. (2020b)]{Ouafdi2}El Ouafdi, A., Lepelley, D.,   Smaoui, H. and Serais, J. (2020b) Sur la manipulabilit\'e coalitionnelle du vote par note  \`a   trois niveaux. \textit{Working paper}.

\bibitem[Kamwa and Moyouwou (2020)]{KaM}Kamwa, E. and  Moyouwou, I. (2020) Susceptibility  to  manipulation by  sincere truncation: the case of scoring rules and  scoring runoff systems. In \textit{Evaluating Voting Systems with Probability Models, Essays by and in honor of William Gehrlein and Dominique Lepelley}. Diss, M.  and Merlin, V. (editors). Springer, Berlin.

\bibitem[Kelly (1993)]{Kelly}Kelly, J.S. (1993) Almost all social choice rules are highly manipulable, but a few aren't.  \textit{Social Choice and Welfare} 10:161--175.

\bibitem[Kim  and Roush (1996)]{KimRoush}Kim, K.H. and Roush, F.W. (1996) Statistical manipulability of social choice functions. \textit{Group Decision and Negotiation} 5:263--282.

\bibitem[Lepelley et al. (2008)]{Lepelley2}Lepelley, D., Louichi, A. and Smaoui, H. (2008) On Ehrhart polynomials and probability calculations in voting theory. \textit{Social Choice and Welfare} 30(3):363--383. 

\bibitem[Lepelley and Mbih (1987)]{LM87}Lepelley, D. and Mbih, B. (1987) The proportion of coalitionally unstable situations under the Plurality rule. \textit{Economics Letters} 24(4):311--315.

\bibitem[Lepelley and Mbih (1994)]{Lepelley}Lepelley, D. and  Mbih, B. (1994) The vulnerability of four social choice functions to coalitional manipulation of preferences. \textit{Social Choice and Welfare} 11(3):253--265.

\bibitem[Lepelley and Valognes (2003)]{Lepelley3}Lepelley, D. and Valognes, F. (2003) Voting rules, manipulability and social homogeneity. \textit{Public Choice} 116(1/2):165--184. 

\bibitem[Moyouwou and Tchantcho (2017)]{MoyouwouT2017}Moyouwou, I. and   Tchantcho, H. (2017)  Asymptotic vulnerability of positional voting rules to coalitional manipulation. \textit{Mathematical Social Sciences} 89:70--82.

\bibitem[Nitzan (1985)]{Nitzan}Nitzan, S. (1985) The vulnerability of point-voting schemes to preference variation and strategic manipulation, \textit{Public Choice} 47:349--370.

\bibitem[Peleg (1979)]{Peleg}Peleg, B. (1979) A note on manipulability of large voting schemes. \textit{Theory and Decision}  11:401--412.

\bibitem[Pritchard and Wilson (2007a)]{Wilson1}Pritchard, G. and Wilson, M. (2007a) Exact results on manipulability of positional voting rules. \textit{Social Choice and Welfare}  29(3):487--513.

\bibitem[Pritchard and Wilson (2007b)]{Wilson2}Pritchard, G. and Wilson, M. (2007b) Probability calculations under the IAC hypothesis. \textit{Mathematical Social Sciences}  54(3):244--256.

\bibitem[Saari (1990)]{Saari90}Saari, D. (1990)  Susceptibility to manipulation. \textit{Public Choice} 64:21--41.

\bibitem[Satterthwaite (1975)]{Satterthwaite}Satterthwaite, M.A. (1975) Strategy-proofness and Arrow's conditions: Existence and correspondence theorems for voting procedures and social welfare functions. \textit{Journal of Economic Theory} 10(2):187--217.

\bibitem[Sch\"{u}rmann (2013)]{Schurmann}Sch\"{u}rmann, A. (2013) Exploiting polyhedral symmetries in social choice. \textit{Social Choice and Welfare} 40(4):1097--1110.

\end{thebibliography}

\end{document}